\newtheorem{thm}{Theorem}[section]
\newtheorem{prop}[thm]{Proposition}
\theoremstyle{definition} 
\newtheorem{defn}[thm]{Definition}
\theoremstyle{remark}  
\def\beq{\begin{eqnarray}}  
\def\eeq{\end{eqnarray}}  
\def\bsp{\begin{split}}  
\def\esp{\end{split}}
\def\d{\mathrm{d}}
\newcommand{\mf}[1]{{\mathfrak #1}}
\newcommand{\mbold}[1]{\mbox{\boldmath{\ensuremath{#1}}}}
\begin{document}   
   
\title{\Large\textbf{Pseudo-Riemannian VSI spaces II}}  
\author{{\large\textbf{Sigbj\o rn Hervik }    }
 \vspace{0.3cm} \\     
Faculty of Science and Technology,\\     
 University of Stavanger,\\  N-4036 Stavanger, Norway         
\vspace{0.3cm} \\      
\texttt{sigbjorn.hervik@uis.no} }     
\date{\today}     
\maketitle   
\pagestyle{fancy}   
\fancyhead{} 
\fancyhead[EC]{S. Hervik}   
\fancyhead[EL,OR]{\thepage}   
\fancyhead[OC]{Pseudo-Riemannian VSI metrics II}   
\fancyfoot{} 
   
\begin{abstract}  
In this paper we consider pseudo-Riemannian spaces of 
 arbitrary signature for which all of the polynomial curvature 
 invariants vanish (VSI spaces).  Using an algebraic classification 
 of pseudo-Riemannian spaces in terms of the boost-weight decomposition 
 we first show more generally  that a space which is not characterised by its invariants must possess the  ${\bf S}_1^G$-property.  As a corollary, we then show that a VSI space must possess the ${\bf N}^G$-property (these results are the analogues of the alignment theorem, including corollaries, for Lorentzian spacetimes). As an application we classify all 4D neutral VSI spaces and show that these belong to one of two classes: (1) those that possess a geodesic, expansion-free, 
 shear-free, and twist-free null-congruence (Kundt metrics), or (2) those that possess an invariant null plane (Walker metrics).  By explicit construction we show that the latter class contains a set of VSI metrics which have not previously been considered in the literature. 
 \end{abstract} 
 
\section{Introduction} 
 
In this paper we will consider an arbitrary-dimensional pseudo-Riemannian space of signature $(k,k+m)$. 
We will investigate when such a space has a degenerate curvature structure; 
in particular, we shall determine criteria for when a space, or tensor, has all vanishing polynomial curvature invariants (VSI space).  
Recall that a polynomial curvature invariant is defined as the polynomial invariants of the components of the curvature tensors.  
Previously, the VSI spaces for Lorentzian metrics have been studied \cite{VSI} and it was shown that 
these comprise a subclass of the degenerate Kundt metrics \cite{degen}.  Here, we will see that  
Kundt-like metrics also play a similar 
role for pseudo-Riemannian VSI metrics of arbitrary signature, however, we will see that another class of metrics arises in the pseudo-Riemannian case, namely the Walker metrics \cite{Walker}. 
In order to obtain these results we will utilise invariant theory to obtain important properties of the structure of tensors having degenerate invariants. In particular, tensors not characterised by their invariants will be shown to possess the ${\bf S}^G_1$-property, while in the VSI case they necessarily must possess the ${\bf N}^G$-property.  We will  
use this fact  to construct a new set 
of 4-dimensional Walker metrics with vanishing curvature invariants of neutral signature. 
 
Walker metrics are metrics possessing a invariant null-plane and have been studied in various contexts \cite{Walker,Law}. Here we will show that they also play a role in the classification of VSI metrics. Indeed, we will give a new class of VSI metrics which has not been considered before. These metrics are related to a bigger class of Walker metrics with a degenerate curvature structure. The curvature structure of these metrics are distinct from the Kundt metrics known from the Lorentzian case. One of the consequences of this district feature is that we need to consider invariants containing up to four derivatives. Indeed, interestingly, there is a family of Walker metrics which is VSI$_3$, but not VSI$_4$: the perhaps simplest member of this family is
\beq\label{VSI3}
ds^2=2du(dv+Vdu)+2dU(dV+av^4dU),    
\eeq
where $a$ is a constant. 
This peculiar property of being VSI$_3$ but not VSI$_4$  has no analogue in the Lorentzian case\footnote{In the Lorentzian case, VSI$_2$ implies VSI \cite{VSI}, while VSI$_1$ Kundt implies VSI \cite{degen}.}.

First we will review some of the techniques used in this paper. Then we will provide with the general result for tensors (or spaces) not being characterised by its invariants. This result is the analogue of the alignment theorem in the Lorentzian-signature case \cite{alignment}. Then, as a corollary, we will state the important VSI case. We will then use this VSI result to consider the 4-dimensional neutral case in detail. 

\subsection{Boost weight decomposition}  
Let us first review the boost weight classification, 
originally used to study degenerate metrics in Lorentzian geometry \cite{class}, 
in the pseudo-Riemannian case \cite{bw-pseudo}.  We will assume the manifold is of dimension $(2k+m)$ and of signature  $(k,k+m)$.
We first introduce a suitable (real) null-frame such that the metric can be written as: 
\beq 
\d s^2=2\left({\mbold\ell}^1{\mbold n}^1+\dots+{\mbold\ell}^I{\mbold n}^I+\dots+{\mbold\ell}^{k}{\mbold n}^{k}\right)+\delta_{ij}{\mbold m}^i{\mbold m}^j, 
\label{pseudometric}\eeq 
where the indices $i=1,\dots, m$.

Let us consider the $k$ independent boosts which forms an abelian subgroup of the group $SO(k,k+m)$: 
\beq 
({\mbold \ell}^1,{\mbold n}^1)&\mapsto& (e^{\lambda_1}{\mbold\ell}^1,e^{-\lambda_1}{\mbold n}^1)\nonumber\\ 
({\mbold \ell}^2,{\mbold n}^2)&\mapsto& (e^{\lambda_2}{\mbold\ell}^2,e^{-\lambda_2}{\mbold n}^2)\nonumber\\ 
& \vdots & \nonumber\\ 
({\mbold\ell}^{k},{\mbold n}^{k})&\mapsto& (e^{\lambda_k}{\mbold\ell}^{k},e^{-\lambda_k}{\mbold n}^{k}). 
\label{eq:boost}\eeq 
This action will be considered pointwise at the manifold.

For a tensor $T$, we can then consider the boost weights of this tensor, ${\bf b}\in \mathbb{Z}^k$, as follows. If we consider the components of $T$ with respect to the above-mentioned null-frame then if a component $T_{\mu_1...\mu_n}$ transforms as: 
\[ 
T_{\mu_1...\mu_n}\mapsto e^{(b_1\lambda_1+b_2\lambda_2+...+b_k\lambda_k)}T_{\mu_1...\mu_n}, 
\] 
then we will say the component $T_{\mu_1...\mu_n}$ is  
of boost weight ${\bf b}\equiv (b_1,b_2,...,b_k)$. We can now decompose a tensor  
into boost weights; in particular,  
\[ T=\sum_{{\bf b}\in  \mathbb{Z}^k}(T)_{\bf b},\]  
where $(T)_{\bf b}$ means the projection onto the components of boost weight ${\bf b}$.  
The projections $(T)_{\bf b}$ are the eigentensors of a set of commuting operators (the infinitesimal generators of the boosts) with integer eigenvalues.  
For example, a tensor $P=A{\mbold\ell}^I{\mbold n}^J{\mbold m}^i{\mbold m}^j$ with $I\neq J$ and $A$ is some scalar, has boost weight ${\bf b}=(b_1,...,b_k)$ where $b_I=-1$, $b_J=1$, other $b_i=0$.  Indeed, writing out a totally covariant tensor $T$ using the basis in (\ref{pseudometric}), the boost weight is given by ${\bf b}=(b_I)$ where $b_I=\#({\mbold n}^I)-\#({\mbold \ell}^I)$.

By considering tensor products, the boost weights obey the following additive rule:  
\beq 
(T \otimes S)_{{\bf b}}=\sum_{\tilde{\bf b}+\hat{\bf b}={\bf b}}(T)_{\tilde{\bf b}}\otimes (S)_{\hat{\bf b}}. 
\label{bsum}\eeq 
 We also note that the metric $g$ is of boost weight 0, i.e., $g=(g)_0$; hence, raising and lowering indices of a tensor do not change the boost weights. 
 
\subsection{The ${\bf S}_i$- and ${\bf N}$-properties} 
 
Let us consider a tensor, $T$, and list a few conditions that the tensor components may fulfill \cite{bw-pseudo,VSI-pseudo}: 
\begin{defn} \label{cond}We define the following conditions: 
\begin{enumerate} 
\item[B1)]{} $(T)_{\bf b}=0$, for all ${\bf b}=(b_1,b_2,b_3,...,b_k)$, $b_1>0$.  
\item[B2)]{} $(T)_{\bf b}=0$, for all ${\bf b}=(0,b_2,b_3,...,b_k)$, $b_2>0$.  
\item[B3)]{} $(T)_{\bf b}=0$, for all ${\bf b}=(0,0,b_3,...,b_k)$, $b_3>0$. 
\item[$\vdots$]{}  
\item[B$k$)]{}  $(T)_{\bf b}=0$, for all ${\bf b}=(0,0,...,0,b_k)$, $b_k>0$. 
\end{enumerate} 
\end{defn} 
 
\begin{defn} 
We will say that a tensor $T$ possesses the ${\bf S}_1$-property if and only if there exists a null frame such that condition B1) above is satisfied. Furthermore, we say that $T$ possesses the ${\bf S}_i$-property if and only if there exists a null frame such that conditions B1)-B$i$) above are satisfied. 
\end{defn} 
\begin{defn} 
We will say that a tensor $T$ possesses the ${\bf N}$-property if and only if there exists a null frame such that conditions B1)-B$k$) in definition \ref{cond} are satisfied, \emph{and}  
\[ (T)_{\bf b}=0, \text{ for }  {\bf b}=(0,0,...,0,0).\]  
 
\end{defn} 
Let us also recall the following result \cite{bw-pseudo,VSI-pseudo}:
\begin{prop} 
For tensor products we have: 
\begin{enumerate} 
\item{} 
Let $T$ and $S$ possess the ${\bf S}_i$- and ${\bf S}_j$-property, respectively. Assuming,  
with no loss of generality, that $i\leq j$, then $T\otimes S$ possesses the ${\bf S}_{i}$-property. 
\item{} Let  $T$ and $S$ possess the ${\bf S}_i$- and ${\bf N}$-property, respectively. Then  $T\otimes S$  possesses the ${\bf S}_i$-property. If $i=k$, then   $T\otimes S$ possesses the ${\bf N}$-property. 
\item{}  Let  $T$ and $S$ both possess the ${\bf N}$-property. Then  $T\otimes S$, and any contraction thereof, possesses the ${\bf N}$-property. 
\end{enumerate} 
\end{prop}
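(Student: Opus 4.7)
My plan is to translate conditions B1)--B$\ell$) into a lexicographic order on $\mathbb Z^k$ and then read all three items off the sum rule (\ref{bsum}), working throughout in a common null frame in which both $T$ and $S$ manifest their respective properties. Specifically, I would first observe that a tensor $R$ satisfies conditions B1)--B$i$) if and only if every nonzero projection $(R)_{\bf b}$ has truncated weight $(b_1,\ldots,b_i)\leq(0,\ldots,0)$ in the lexicographic order on $\mathbb Z^i$, and that $R$ possesses ${\bf N}$ if and only if this inequality is \emph{strict} at $i=k$, i.e.\ the full weight $(b_1,\ldots,b_k)$ is lexicographically below ${\bf 0}$.

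The combinatorial heart of the argument is then the following small lemma: writing $L_i:=\{{\bf a}\in\mathbb Z^i:{\bf a}\leq{\bf 0}\text{ lex}\}$, the set $L_i$ is closed under coordinatewise addition, and the sum ${\bf a}+{\bf c}$ lies in $L_i\setminus\{{\bf 0}\}$ whenever at least one summand does. I would establish this either by a short case analysis on the position of the first nonzero entry of each summand, or more slickly via the observation $L_i\cap(-L_i)=\{{\bf 0}\}$: if ${\bf a},{\bf c}\in L_i$ and ${\bf a}+{\bf c}={\bf 0}$, then ${\bf c}=-{\bf a}$ lies simultaneously in $L_i$ and $-L_i$, forcing ${\bf a}={\bf c}={\bf 0}$.

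Given the lemma, the three items fall out of (\ref{bsum}). Any surviving summand $(T)_{\tilde{\bf b}}\otimes(S)_{\hat{\bf b}}$ of $(T\otimes S)_{\bf b}$ has both truncations $(\tilde b_1,\ldots,\tilde b_i)$ and $(\hat b_1,\ldots,\hat b_i)$ in $L_i$ (restricting the hypothesis on $S$ from $L_j$ down to $L_i$ via $i\leq j$ in part~1), so the lemma places the truncation of ${\bf b}$ in $L_i$, yielding ${\bf S}_i$ for the product in parts~1 and~2. When $i=k$ in part~2, and when both tensors are of type ${\bf N}$ in part~3, the strict form of the lemma upgrades the conclusion to ${\bf N}$. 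For the contraction statement in part~3, I would use that the metric $g$ has boost weight ${\bf 0}$, so any metric contraction is boost-equivariant and commutes with the boost-weight projection; hence the vanishing of all projections of $T\otimes S$ outside $L_k\setminus\{{\bf 0}\}$ is inherited by every contraction.

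I do not anticipate a real obstacle. The one point that may warrant a sentence is the implicit frame choice: ${\bf S}_i$ and ${\bf N}$ are defined by the existence of adapted frames for $T$ and for $S$ individually, whereas the proof above delivers the conclusion in any null frame that simultaneously realises both hypotheses. This is the natural setting for the subsequent applications to products and contractions of curvature tensors and their covariant derivatives at a single point, so no additional work is required.
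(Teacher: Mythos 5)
Your proof is correct and uses the same mechanism the paper relies on --- the additivity rule (\ref{bsum}) for boost weights under tensor products, with conditions B1)--B$i$) repackaged as lexicographic nonpositivity of the truncated weight; note the paper itself only recalls this proposition from \cite{bw-pseudo,VSI-pseudo} without proof, but its subsequent argument for the ${\bf S}_i^G$ version is exactly this computation. Your closing remark about the common adapted frame is the right reading and worth keeping: the statement is only true when both hypotheses are realised in the \emph{same} null frame (e.g.\ $\ell\otimes\ell$ and $n\otimes n$ each possess the ${\bf N}$-property in their own frames, yet their product has the nonvanishing invariant $(\ell_a n^a)^4$), and that is the setting in which the paper applies it to curvature tensors and their derivatives.
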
 

We extend this and define a set of related conditions which will prove useful to us.  Consider a tensor, $T$, that does not necessarily 
meet any of the conditions above.  However, since the boost weights ${\bf b}\in \mathbb{Z}^k\subset{\mathbb 
R}^k$, we can consider a linear $GL(k)$ transformation, $G:\mathbb{Z}^k\mapsto \Gamma$, where $\Gamma$ is a 
lattice in $\mathbb{R}^k$.  Now, if there exists a $G$ such that the transformed boost weights, $G{\bf b}$, 
satisfy (some) of the conditions in Def.\ref{cond}, we will say, correspondingly, that $T$ possesses the ${\bf 
S}^G_i$-property.  Similarly, for the ${\bf N}^G$-property. 
 
If we have two tensors $T$ and $S$ both possessing the ${\bf S}_i^G$-property, with the same $G$, then when we take the tensor product:  
\[ (T\otimes S)_{G{\bf b}}=\sum_{G\hat{\bf b}+G\tilde{\bf b}=G{\bf b}}(T)_{G\hat{\bf b}}\otimes(S)_{G\tilde{\bf b}}.\] 
Therefore, the tensor product will also possess the ${\bf S}_i^G$-property, with the same $G$.  This will be 
useful later when considering degenerate tensors and metrics with degenerate curvature tensors.  Note 
also that the ${\bf S}_i^G$-property reduces to the ${\bf S}_i$-property for $G=I$ (the identity). 

\subsection{Tensors not characterised by its invariants}
Another useful concept is the question when a tensor/space-time is ``characterised by its invariants''. Henceforth, by \emph{invariants} we will always mean the \emph{polynomial invariants}. Such have been discussed in several papers both in the Lorentzian case, as well as in the more general case \cite{OP,epsilon}.

We will now recall some of the definitions and concepts from invariant theory, see e.g., \cite{GW,RS,eberlein}. 
For a tensor $T$, we define the action of the semi-simple group $G=O(k,k+m)$ on the components of $T$ as follows. For simplicity, assume that the components of $T$ have been lowered: $T_{a_1...a_p}$. 
We form the  $N$-tuple consisting of the components of $T$ as $X=[T_{a_1a_2...a_{p}}]\in \mathbb{R}^N$. The action corresponds to a frame rotation and explicitly, if we consider the matrix $g=(M^a_{~b})\in O(k,k+m)$, acting as a frame rotation $g\omega=\{ M^a_{~1}{\mbold e}_a,..., M^a_{~n}{\mbold e}_a\}$, the frame rotation induces an action on $X$ through the tensor structure of the components:
\[ g(X)=\left[ M^{b_1}_{~a_1}...M^{b_{p}}_{~a_{p}}T_{b_1...b_{p}}\right]. \]
The (real) orbit $\mathcal{O}(X)$ is now defined by:
\[\mathcal{O}(X)\equiv \{ g(X)\in \mathbb{R}^N ~\big{|}~g\in O(k,k+m) \}\subset \mathbb{R}^N.\]
We can then  extend this definition to a direct sum of vectors, $T=T^{(1)}\oplus ...\oplus T^{(q)}$. The action $g(X)$ on the components are then extended through the standard direct-sum representaion of the group $G$ acting on the direct sum of tensors. 

In the case of a pseudo-Riemannian space, $T$ is a direct sum of the curvature tensors, $$T={\rm Riem}\oplus \nabla{\rm Riem}\oplus \nabla\nabla{\rm Riem}\oplus...\oplus \nabla^{(K)}{\rm Riem}$$ up to some sufficiently high order $K$.

\begin{defn}
A tensor $T$ (or pseudo-Riemannian space) is \emph{characterised by its invariants} if and only if the corresponding orbit $\mathcal{O}(X)$ is topologically closed in $\mathbb{R}^N$ with respect to the standard Euclidean topology.
\end{defn} 
The motivation for this definition is given in \cite{alignment} -- essentially, the set of closed orbits:
\[
\mf{C}=\{\mathcal{O}(X)\subset V ~\big{|}~ \mathcal{O}(X) \text{ closed}  \},
\]
is parameterised by the invariants, possibly up to a complex rotation (indeed, the complexified orbits are parameterised uniquely, the real orbits intersect these a finite number of times).\footnote{In \cite{RS} they denote  this set as $V//G$.}

For more on these issues we would refer the reader to \cite{GW,RS,eberlein,alignment}.

\section{Pseudo-Riemannian metrics not characterised by its invariants}
A tensor, $T$, satisfying the ${\bf S}_i^G$-property or ${\bf N}^G$-property is not generically  
determined by its invariants in the sense that there may be another tensor, $T'$, with  
precisely the same invariants. The ${\bf S}^G_i$-property thus implies a certain \emph{degeneracy} in the tensor.  

 Indeed:
\begin{thm}\label{mainthm}
A tensor $T$ is not characterised by its invariants if and only if it possesses (at least) the ${\bf S}_1^G$-property.
\end{thm}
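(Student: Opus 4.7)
The plan is to reduce the statement to the real reductive form of the Hilbert--Mumford criterion, in exactly the spirit of the Lorentzian alignment theorem of \cite{alignment}. With $G=O(k,k+m)$ the acting group, the key input will be the theorem of Birkes / Richardson--Slodowy (the real analogue of Hilbert--Mumford, resting on a Kempf--Ness minimisation of a $K$-invariant norm for $K\subset G$ maximal compact): the orbit $\mathcal{O}(X)\subset\mathbb{R}^N$ is closed if and only if there is no non-trivial one-parameter subgroup $\lambda:\mathbb{R}^\times\to G$ and no $g\in G$ for which $\lim_{s\to\infty}\lambda(e^s)\cdot(gX)$ exists in $V$ and lies in a strictly different orbit.

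The second step is to bring such a $\lambda$ into standard form. Because the $\mathbb{R}$-split part of $O(k,k+m)$ has rank $k$, with a maximal $\mathbb{R}$-split torus given precisely by the abelian boost group displayed in (\ref{eq:boost}), every non-compact one-parameter subgroup is $G$-conjugate to one sitting inside the boosts. Absorbing the conjugator into the null frame, one may assume $\lambda$ acts by boosts with weight vector ${\bf a}=(a_1,\dots,a_k)\in\mathbb{R}^k\setminus\{0\}$, so that $\lambda(e^s)$ multiplies each component $(T)_{\bf b}$ by $e^{s\,{\bf a}\cdot{\bf b}}$. Existence of $\lim_{s\to\infty}\lambda(e^s)X$ is therefore equivalent to $(T)_{\bf b}=0$ for every ${\bf b}$ with ${\bf a}\cdot{\bf b}>0$, which is exactly condition B1) applied to the transformed boost weights $G{\bf b}$ for any invertible $G\in GL(k)$ whose first row is ${\bf a}$.

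With this dictionary the two implications become almost symmetric. For ``not characterised $\Rightarrow\mathbf{S}_1^G$'' one reads off ${\bf a}$ directly from the contracting 1-PS supplied by the real criterion and completes it to an invertible $G\in GL(k)$ (the remaining rows being essentially free). For the converse one uses the $\mathbf{S}_1^G$-data to \emph{construct} a contracting boost 1-PS from ${\bf a}$: its limit $T_\infty$ is the projection of $T$ onto $\{{\bf a}\cdot{\bf b}=0\}$, it is strictly smaller than $T$ whenever $T$ has a non-zero component with ${\bf a}\cdot{\bf b}<0$, and being $\lambda$-fixed it forces $T_\infty\notin\mathcal{O}(X)$ and thereby non-closedness of the orbit.

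The main obstacle will be precisely this last step, namely verifying in the real category that $T_\infty\notin\mathcal{O}(X)$ and handling the degenerate case where $T$ is already wholly supported on $\{{\bf a}\cdot{\bf b}=0\}$ so that the chosen $\lambda$ fixes $T$. Over $\mathbb{C}$ non-closedness would be immediate from Hilbert--Mumford together with a stabiliser dimension count, but over $\mathbb{R}$ one must run the Kempf--Ness minimum-vector argument honestly. The degenerate case is then resolved by perturbing ${\bf a}$ within the open cone cut out by B1) until the limit strictly decreases, and by invoking density of $\mathbb{Q}^k$ in $\mathbb{R}^k$ so that the matrix $G$ may ultimately be chosen integer-valued, as required by the definition of the $\mathbf{S}_1^G$-property.
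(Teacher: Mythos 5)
Your forward direction (``not characterised $\Rightarrow$ ${\bf S}_1^G$'') is essentially the paper's proof. The paper invokes Richardson--Slodowy to produce a destabilising one-parameter subgroup $\exp(\tau\mathcal{X})$ with $\mathcal{X}$ in the symmetric part $\mathfrak{B}$ of the Cartan decomposition, diagonalises $\mathcal{X}$ by a singular value decomposition under the maximal compact $SO(k)\times SO(k+m)$ so that it becomes a pure boost with weight vector ${\mbold\lambda}$, reads off $(T)_{\bf b}=0$ for ${\bf b}\cdot{\mbold\lambda}>0$ from finiteness of the limit, and finally rotates ${\mbold\lambda}$ onto the first axis by a $G\in O(k)$ acting on boost-weight space. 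Your conjugation of the 1-PS into the maximal $\mathbb{R}$-split torus is the same step in different language, and your observation that $G$ need only have ${\bf a}$ as its first row matches the paper. (Note the paper takes $G\in O(k)$, real-valued; the image of $\mathbb{Z}^k$ under any invertible real matrix is already a lattice, so your concluding worry about choosing $G$ integer-valued is moot.)

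Where you go beyond the paper is the converse, which the paper asserts in the theorem but never actually proves. Here the obstacle you flag yourself is genuine, and your proposed fix does not close it: if $T$ is supported entirely on the hyperplane $\{{\bf a}\cdot{\bf b}=0\}$ --- the metric tensor $g=(g)_{\bf 0}$ is the extreme example, satisfying B1) vacuously in every frame --- then every boost in the B1) cone fixes $T$, no perturbation of ${\bf a}$ within that cone produces a strictly smaller limit, and the orbit can perfectly well be closed. So ``${\bf S}_1^G\Rightarrow$ not characterised'' fails as literally stated; the paper itself only hedges with ``not generically determined'' in the preamble and silently drops this direction in its proof. You should either restrict the converse to tensors possessing a nonzero component with ${\bf a}\cdot{\bf b}<0$ (where your Kempf--Ness/minimal-vector argument does yield $T_\infty\notin\mathcal{O}(X)$ and hence non-closedness, since $T_\infty$ is fixed by the 1-PS while $T$ is not) or state explicitly that the equivalence holds only modulo this degenerate case.
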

\begin{proof}
Assume that $T$ is not charaacterised by its invariants; i.e., the corresponding orbit is not closed. 
Using the results of Richardson-Slodowy \cite{RS}, there then exists a ${\mathcal X}\in \mathfrak{B}$, where $\mathfrak{B}$ is the vector subspace of the Lie algebra $\mf{s}\mf{o}(k,k+m)$ consisting of symmetric matrices (so that $\mf{s}\mf{o}(k,k+m)=\mf{B}\oplus \mf{K}$, where $\mf{K}$ is the Lie algebra of the maximal compact subgroup),  such that $\exp(\tau{\mathcal X})(T)\rightarrow p$. We note that the maximal compact subgroup of $SO(k,m+k)$ is $K\cong SO(k)\times SO(m+k)$, which we represent as $g=(g_1,g_2)\in SO(k)\times SO(m+k)$. The  ${\mathcal X}$ can be represented as:
\beq
{\mathcal X}=\begin{bmatrix} 
{\bf 0}_k & S \\
S^t & {\bf 0}_{m+k}
\end{bmatrix},
\eeq
where $S$ is an $k\times (k+m)$ matrix. The transformation, $g^{-1}{\mathcal X}g$ induces a transformation of $S$ according to $g_1^{-1}Sg_2$, $(g_1,g_2)\in SO(k)\times SO(m+k)$. Thus by the singular value decomposition we can always find a $g\in K$ such that $S$ is diagonal: $S=diag(\lambda_1,...,\lambda_k)$. This therefore corresponds to a pure boost; specifically, by applying a null-frame the ${\mathcal X}$ will be represented the boost given in eq.(\ref{eq:boost}). Henceforth, let us represent ${\mbold\lambda}=(\lambda_1,..,\lambda_k)$ as a vector. Then if the tensor $T$ is decomposed using the corresponding boost-weight components relative to the null-frame; i.e., $T=\sum_{\bf b}(T)_{\bf b}$, we can write:
\beq
\exp(\tau{\mathcal X})(T)_{\bf b}=\exp(\tau{\bf b}\cdot{\mbold \lambda})(T)_{\bf b}\eeq
where ${\bf b}\cdot{\mbold \lambda}=\sum_{i=1}^nb_i\lambda_i$. 
In the limit $\tau\rightarrow \infty$, $\exp(\tau{\mathcal X})(T)$ has to approach $p$ which is finite: hence, if $(T)_{\bf b}\neq 0$ we get the requirement  ${\bf b}\cdot{\mbold \lambda}\leq 0$. In particular, 
\beq
\exp(\tau{\bf b}\cdot{\mbold \lambda})(T)_{\bf b}&\rightarrow& (T)_{\bf b}, \quad {\bf b}\cdot{\mbold \lambda}=0, \nonumber \\
\exp(\tau{\bf b}\cdot{\mbold \lambda})(T)_{\bf b}&\rightarrow& 0, \quad {\bf b}\cdot{\mbold \lambda}<0,
\eeq
all other $(T)_{\bf b}$ must be zero:
\beq 
\label{eq:posbw=0}
(T)_{\bf b}=0, \quad {\bf b}\cdot{\mbold \lambda}>0.\eeq

Using a $G\in O(k)$ transformation in boost-weight space we can align ${\mbold\lambda}$ with the first basis vector so that $G{\mbold\lambda}=|{\mbold\lambda}|(1,0,0,...0)$. Thus the requirement eq.(\ref{eq:posbw=0}) implies that $T$ fulfills the ${\bf S}^G_1$-property.
\end{proof}

\subsection{The VSI properties} 
 
For the VSI spaces we now get an important corollary:

\begin{thm} 
For a tensor $T$ in pseudo-Riemannian space the following is equivalent: 
\begin{enumerate}
\item{} $T$ has only vanishing polynomial invariants (VSI).
\item{} Any operator constructed from $T$ (by rasing/lowering indices, contractions, and tensor products) is nilpotent.
\item{} $T$ possesses the ${\bf N}^G$-property. 
\end{enumerate} 
\end{thm}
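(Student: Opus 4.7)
The plan is to prove the cyclic chain $(3)\Rightarrow(2)\Rightarrow(1)\Rightarrow(3)$, reducing the statement to Theorem~\ref{mainthm} together with standard invariant theory.

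For $(3)\Rightarrow(2)$ I would argue directly from boost-weight bookkeeping. Given the ${\bf N}^G$-property, there is $G\in GL(k)$ such that every nonvanishing $(T)_{\bf b}$ has $G{\bf b}$ strictly lexicographically negative. Any operator $A$ obtained from $T$ by raising/lowering indices (using $g$, which has boost weight $\mathbf{0}$), tensor products and contractions inherits the same property via the additive rule (\ref{bsum}), since a finite sum of lex-negative vectors in $\mathbb{R}^k$ is itself lex-negative and hence nonzero. In particular, no component of $A$ or of any iterate $A^n$ carries boost weight $\mathbf{0}$, so $\Tr(A^n)=0$ for every $n\geq 1$. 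Newton's identities then kill every coefficient of the characteristic polynomial of $A$, and Cayley--Hamilton yields nilpotency.

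The implication $(2)\Rightarrow(1)$ follows from the first fundamental theorem of invariant theory for $O(k,k+m)$: every polynomial invariant of $T$ is generated by complete contractions of tensor products of copies of $T$ with $g$ and $g^{-1}$, i.e.\ by traces of operators of the kind just considered. If every such operator is nilpotent, every such trace, and therefore every polynomial invariant of $T$, vanishes.

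The core step is $(1)\Rightarrow(3)$, which recycles the machinery of the proof of Theorem~\ref{mainthm}. If all polynomial invariants of $T$ vanish, then $T$ lies in the nullcone of the $O(k,k+m)$-action, and so the orbit closure $\overline{\mathcal{O}(T)}$ contains $0$. Richardson--Slodowy then produces a symmetric $\mathcal{X}\in\mathfrak{B}$ with $\exp(\tau\mathcal{X})(T)\to 0$; diagonalising in a suitable null frame identifies $\mathcal{X}$ with a boost parameter $\mbold{\lambda}=(\lambda_1,\ldots,\lambda_k)$. Because the limit is now $0$ rather than the finite nonzero $p$ of Theorem~\ref{mainthm}, the weak inequality ${\bf b}\cdot\mbold{\lambda}\leq 0$ is sharpened to the strict one ${\bf b}\cdot\mbold{\lambda}<0$ on the support of $T$. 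A rotation $G\in O(k)$ sending $\mbold{\lambda}$ to a positive multiple of $(1,0,\ldots,0)$ then makes the first component of $G{\bf b}$ strictly negative whenever $(T)_{\bf b}\neq 0$, which simultaneously enforces B1)--B$k$) in the transformed frame and $(T)_{\mathbf{0}}=0$, i.e.\ the ${\bf N}^G$-property. The main obstacle I anticipate is the passage from the vanishing of \emph{real} polynomial invariants to the conclusion that $T$ sits in the nullcone: over $\mathbb{C}$ this is immediate from Hilbert--Mumford, while over $\mathbb{R}$ one must either quote the real version for reductive groups or argue that $\{0\}$ is the unique closed orbit annihilated by all invariants. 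Everything else is routine combinatorics of boost weights together with the tools already developed in the preceding subsections.
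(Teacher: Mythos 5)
Your proposal is correct and takes essentially the same route as the paper: the core implication $(1)\Rightarrow(3)$ is precisely the paper's argument (Richardson--Slodowy degeneration to the limit $p=0$, sharpening ${\bf b}\cdot{\mbold\lambda}\leq 0$ to ${\bf b}\cdot{\mbold\lambda}<0$ on the support, then rotating ${\mbold\lambda}$ onto the first axis), and the issue you flag --- that $\{0\}$ is the unique closed real orbit with all invariants vanishing --- is resolved in the paper exactly as you suggest, via the complexified orbit. The implications $(1)\Leftrightarrow(2)$ and $(3)\Rightarrow(1)$ are simply delegated by the paper to \cite{OP}; your explicit arguments for $(3)\Rightarrow(2)$ (lex-negativity of transformed boost weights is additive, so all traces $\Tr(A^n)$ vanish and nilpotency follows) and $(2)\Rightarrow(1)$ (first fundamental theorem for the orthogonal group) are sound substitutes for that citation.
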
 
\label{thm:N-property}
\begin{proof} 
The proof of 1 $\Leftrightarrow $ 2 follows from \cite{OP}. Furthermore, 3 $\Rightarrow$ 1 follows from this work also. Left to prove is thus  1 $\Rightarrow$ 3. 

From the proof of theorem \ref{mainthm} we see that tensors having all vanishing invariants must either have closed orbits, or it has a limit which approaches an element in this closed orbit. We note that the zero-tensor $\tilde{T}=0$ has a closed orbit, and since the complex orbit consists of only the zero element, the zero-tensor must be the unique tensor which has closed (real) orbits. Thus, we can choose the limit in the proof to be $p=0$. This implies that eq.(\ref{eq:posbw=0}) turns into the stronger requirement:
\beq  (T)_{\bf b}=0, \quad {\bf b}\cdot{\mbold \lambda}\geq 0.
\eeq
By the same transformation matrix $G$, we write $G{\mbold\lambda}=|\lambda|(1,0,0,...,0)$ and the ${\bf N}^G$-property follows. 
\end{proof} 

\section{4D Neutral space: all VSI metrics} 
The even-dimensional case with signature $(k,k)$ (i.e., $m=0$) is called the \emph{neutral} case. 
Let us consider the 4D neutral case which is of particular interest (see, e.g., \cite{twist,AMCH}); in particular,  we will use the above theorem to find all neutral VSI spaces of dimension 4. Such spaces has been studied before, however, only spaces satisfying the ${\bf N}$-property were investigated. Although it was noted that the ${\bf N}^G$-property was sufficient for VSI this possibility was not investigated in detail. Indeed, we will show that there are VSI spaces satisfying the ${\bf N}^G$-property, but not the ${\bf N}$-property thus establishing a new class of VSI spacetimes. We also derive all such metrics and show that they are all Walker metrics possessing an invariant null-plane. 

In 4D neutral signature we thus get two classes of metric, the Kundt metrics and the Walker metrics. These will be reviewed in what follows. We will also utilise the work of Law \cite{Law} where all the spin-coefficients of 4D neutral space were investigated. Using Law's notation, we adopt the sligthly modified null-frame $({\mbold\ell},{\mbold n},{\mbold m},\tilde{\mbold m})\equiv ({\mbold \ell}^1,{\mbold n}^1,{\mbold \ell}^2,-{\mbold n}^2)$ so that metric (\ref{pseudometric}) can be written:
\beq
ds^2=2{\mbold\ell}{\mbold n}-2{\mbold m}\tilde{\mbold m}. 
\eeq
In the neutral case, this frame is purely real. 
With respect to such a frame, Law defined the spin-coefficients which we will use in proving the main theorem. In \cite{Law} Law writes the spin-coefficients in terms of $\kappa,~\rho,~\sigma,~\tau, ~\epsilon, ~\alpha,~\beta,~\gamma$, and their tilded ($\tilde{\kappa}, ~\tilde{\rho}, ...$.), primed (${\kappa}', ~{\rho}', ...$.), and primed-tilded ($\tilde{\kappa}', ~\tilde{\rho}', ...$.) counterparts. All these spin-coefficients are real. For example, the covariant derivatives of the frame-vector $\ell^a$ can be written as:
\beq
\ell^b\nabla_b\ell^a&=&(\epsilon+\tilde{\epsilon})\ell^a+\tilde{\kappa}m^a+\kappa\tilde{m}^a, \nonumber \\
\tilde{m}^b\nabla_b\ell^a&=&(\alpha+\tilde{\beta})\ell^a+\tilde{\sigma}m^a+\rho\tilde{m}^a, \nonumber \\
m^b\nabla_b\ell^a&=&(\tilde{\alpha}+{\beta})\ell^a+\tilde{\rho}m^a+\sigma\tilde{m}^a, \nonumber \\
n^b\nabla_b\ell^a&=&(\gamma+\tilde{\gamma})\ell^a+\tilde{\tau}m^a+\tau\tilde{m}^a, \\
&...& \text{etc.} \nonumber 
\eeq
We refer to \cite{Law}, in particular, eqs.(2.10-2.11) therein, for details. 
\subsection{Invariant null planes: Walker metrics}  
Here, we will consider the 4D neutral spaces which possesses an invariant null-plane. Such metrics are known as \emph{Walker metrics}. 

Consider two orthogonal null-vectors ${\mbold\ell}$ and ${\mbold m}$. These span an invariant null-plane iff
\beq
\nabla_a({\mbold\ell}\wedge{\mbold m})=k_a({\mbold\ell}\wedge{\mbold m}), 
\eeq
for a vector $k_a$. Using \cite{Law} this immediately implies the vanishing of certain spin-coefficients : 
\[ \kappa=\rho=\sigma=\tau=0.\] 
Indeed, one can see that the vanishing of these spin-coefficients imply the existence of an invariant null-plane (hence, it is a Walker metric).

 Furthermore, Walker \cite{Walker}  showed that the requirement of an  
invariant $2$-dimensional null plane implies that the (Walker) metric can be written in the canonical form:  
\beq 
\mathrm{d}s^2=2\d u(\d v+A\d u+C\d U)+2\d U(\d V+B\d U), 
\eeq 
where $A$, $B$ and $C$ are functions that  may depend on all of the coordinates.
 
In particular, this implies that we can choose a frame such that \cite{Law}
\beq
 \kappa=\rho=\sigma=\tau=\epsilon=\beta=0, &&\quad \alpha'=\gamma'=\rho'=\tau'=0, \\
\tilde{\kappa}=\tilde{\rho}=\tilde{\alpha}=\tilde{\epsilon}=0, && \quad \tilde{\beta}'=\tilde{\gamma}'=\tilde{\sigma}'=\tilde{\tau}'=0. 
\eeq
 We note that $\tilde{\sigma}$ needs not be zero, and hence, these Walker metrics need not be Kundt spacetimes (see below).

\subsection{Pseudo-Riemannian Kundt metrics}

In the Lorentzian case  the Kundt metrics play an important role 
for degenerate metrics, and VSI metrics in particular \cite{VSI}. Their pseudo-Riemannian analogues also play an important role for 
pseudo-Riemannian spaces of arbitary signature \cite{VSI-pseudo,AMCH}. 
 
We define the pseudo-Riemannian Kundt metrics in a similar fashion, namely:  
\begin{defn} 
A pseudo-Riemannian Kundt metric is a metric which possesses a non-zero null vector ${\mbold\ell}$ which is geodesic, expansion-free, twist-free and shear-free. 
\end{defn} 
This implies that, in terms of the spin-coefficients defined in \cite{Law}, that a space is Kundt if and only if there exists a frame such that: 
\beq
\tilde{\kappa}=\kappa=\tilde{\rho}=\rho=\tilde{\sigma}=\sigma=0. 
\eeq 

Therefore, we will consider metrics of the form (which is equivalent to the above definition) 
\beq  
\d s^2=2\d u\left[\d v+H(v,u,x^C)\d u+W_{A}(v,u,x^C)\d x^A\right]+{g}_{AB}(u,x^C)\d x^A\d x^B 
\label{Kundt} 
\eeq 
(here, the indices $A,B$ range over the null-indices $I=2,3$.
The metric (\ref{Kundt}) 
possesses a null vector field ${\mbold\ell}$ obeying\footnote{If, in addition $L_{1i}=\tilde{L}_{1i}=0$, the vector $\ell_\mu$ is also recurrent (hence, Walker), and if $L_{1i}=\tilde{L}_{1i}=L_{11}=0$, then $\ell_\mu$ is covariantly constant. } 
\[ \ell_{\mu;\nu}=L_{11}\ell_\mu\ell_\nu+L_{1i}\ell_{(\mu}m^i_{~\nu)}+\tilde{L}_{1i}\ell_{(\mu}\tilde{m}^i_{~\nu)},\] 
and consequently it is geodesic, non-expanding, shear-free and 
non-twisting. Since this is a pseudo-Riemannian space of signature $(2,2)$, then the transverse metric  
\[ \d s^2_{1}={g}_{AB}(u,x^C)\d x^A\d x^B,\]  
will be of signature $(1,1)$.

 \subsection{The 4D Neutral VSI theorem} 

Let us now state an important result regarding the deterimination of all 4D VSI metrics. 

\begin{thm}
A 4D Neutral VSI metric is of one (or both) of the following types: 
\begin{enumerate}
\item{} A Walker metric possessing an invariant 2-dimensional null-plane.  
\item{} A Kundt metric. 
\end{enumerate}
\end{thm}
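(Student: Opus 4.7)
The plan is to invoke Theorem \ref{thm:N-property} applied to the direct sum $T = \mathrm{Riem} \oplus \nabla\mathrm{Riem} \oplus \cdots \oplus \nabla^{(K)}\mathrm{Riem}$ for sufficiently large $K$. Being VSI, $T$ possesses the $\mathbf{N}^G$-property, so there exist a null frame $(\ell, n, m, \tilde m)$ and a nonzero vector $\lambda = (\lambda_1, \lambda_2) \in \mathbb{R}^2$ such that $(T)_{\mathbf{b}} = 0$ whenever $\mathbf{b} \cdot \lambda \geq 0$. The residual discrete symmetries of the frame (exchanging $\ell^i$ with $n^i$ flips $\lambda_i$, while swapping the two null pairs permutes $\lambda_1$ and $\lambda_2$) allow one to normalise $\lambda$ so that $\lambda_1 \geq \lambda_2 \geq 0$ and $\lambda_1 > 0$. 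The argument then splits on whether $\lambda_2 = 0$ or $\lambda_2 > 0$.

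In Case A, $\lambda_2 = 0$, the alignment collapses to $(T)_{(b_1, b_2)} = 0$ for $b_1 \geq 0$, which is the standard $\mathbf{N}$-property with respect to the single null direction $\ell = \ell^1$. The plan is to adapt the Lorentzian VSI/Kundt argument to 4D neutral signature: expand the Cartan structure equations and the Bianchi identities in Law's spin coefficient conventions, apply the $\mathbf{N}$-condition order by order to $\mathrm{Riem}, \nabla \mathrm{Riem}, \ldots$, and deduce that $\ell$ is geodesic, twist-free, shear-free and expansion-free; equivalently, $\kappa = \tilde\kappa = \rho = \tilde\rho = \sigma = \tilde\sigma = 0$. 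By the criterion of Section 3.2 the metric is then Kundt.

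In Case B, $\lambda_2 > 0$, the region of vanishing boost weights contains the full closed first quadrant (minus the origin) as well as the origin itself; in particular $(T)_{(0, b_2)} = 0$ for every $b_2 \geq 0$, so $T$ is also aligned with the second null direction $m = \ell^2$. Geometrically the distinguished object is now the 2-plane $\mathrm{span}(\ell, m)$. The plan is to show that this double alignment forces the four spin coefficients $\kappa, \rho, \sigma, \tau$ governing $\nabla(\ell \wedge m)$ to vanish, which by the criterion recalled in Section 3.1 is equivalent to $\mathrm{span}(\ell, m)$ being an invariant null 2-plane, i.e., the Walker condition; Walker's theorem then supplies the canonical form.

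The main obstacle will be the passage from vanishing of abstract boost-weight pieces of $\nabla^{(k)} \mathrm{Riem}$ to the vanishing of concrete spin coefficients, since the latter are first-order frame data rather than curvature components. The standard device is to work order by order: write the Bianchi and commutator identities explicitly in Law's formalism \cite{Law}, expand each $\nabla^{(k)} \mathrm{Riem}$ in boost weights, and read off polynomial equations on the spin coefficients from $(T)_{\mathbf{b}} = 0$. A secondary subtlety in Case B is that the ratio $\lambda_1/\lambda_2$ is not pinned down a priori; however, one checks that every value with $\lambda_2 > 0$ yields the same geometric conclusion — Walker — because the extra freedom in $\lambda$ corresponds only to boosts preserving the invariant null plane.
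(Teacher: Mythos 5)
Your opening move --- apply the ${\bf N}^G$-theorem to $T={\rm Riem}\oplus\nabla{\rm Riem}\oplus\dots$, extract a direction ${\mbold\lambda}=(\lambda_1,\lambda_2)$ with $(T)_{\bf b}=0$ whenever ${\bf b}\cdot{\mbold\lambda}\geq0$, normalise so that $\lambda_1\geq\lambda_2\geq0$, and translate into Law's spin coefficients --- is exactly the paper's starting point. The gap is that your organising dichotomy, $\lambda_2=0\Rightarrow$ Kundt and $\lambda_2>0\Rightarrow$ Walker, is not correct, and Case B in particular fails. Consider a Kundt VSI metric of the null class with $W^{(0)}_V\neq0$: the paper notes that such metrics possess no invariant null $2$-plane, so $\tau$ and $\tilde\tau$ are both nonzero. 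Their curvature tensors satisfy only the ${\bf N}$-property, and $\nabla^{(k)}{\rm Riem}$ generically has nonvanishing components at boost weights $(0,b_2)$ with $b_2<0$ and at $(b_1,b_2)$ with $b_1<0<b_2$; any admissible ${\mbold\lambda}$ must then have $\lambda_2>0$. These metrics therefore sit squarely in your Case B, yet your claimed conclusion $\kappa=\rho=\sigma=\tau=0$ is false for them. The underlying error is that the vanishing of $(T)_{(0,b_2)}$ for $b_2\geq0$ is far too weak to count as ``alignment with ${\mbold m}={\mbold\ell}^2$'': components with $b_1<0$ and $b_2$ arbitrarily large and positive survive whenever $b_1\lambda_1+b_2\lambda_2<0$, and it is exactly such components (controlled by $\tau$, $\tilde\tau$, $\tilde\sigma$) that separate Kundt from Walker. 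Symmetrically, your Case A hypothesis --- every component having $b_1<0$ --- is strictly stronger than the ${\bf N}$-property and excludes most Kundt VSI metrics, so the Kundt branch cannot be confined to Case A either.

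The Kundt/Walker alternative in the paper does not come from the direction of ${\mbold\lambda}$ at all. It emerges from a case analysis on the algebraic types of the Ricci and Weyl tensors: one computes specific boost-weight components of $\nabla^{(k)}{\rm Riem}$ for $k$ up to $4$ (and up to $8$ in the worst Weyl branch), obtains polynomial relations among spin coefficients such as $\tilde\sigma\rho=\tilde\sigma\sigma=\tilde\sigma\kappa=\tilde\tau\kappa=0$ or the vanishing of $R_{22;4311}=12\tilde\tau^3\sigma b$, and the dichotomy appears as alternatives of the form ``either $\sigma=0$ (leading to Kundt) or $\tilde\tau=0$ (leading to Walker)'' deep inside that analysis. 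This order-by-order extraction of spin-coefficient constraints --- which you correctly identify as the main obstacle but then defer --- is the entire content of the proof; the clean case split you propose in its place does not reproduce it and cannot be repaired without reinstating that analysis.
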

In order to prove this theorem one needs to consider theorem \ref{thm:N-property} and consider the covariant derivatives $\nabla^{(N)}({\rm Riemann})$. We will prove the theorem using two different methods, one is the more indirect method using the one-parameter family of boosts $B_{\tau}=e^{\tau{\mathcal X}}$, the other is the direct method by explicitly computing the covariant derivatives. These two illustrate two conseptually different methods and both provide us with separate information about the underlying structure of these spaces. For example, while the first is a more 'elegant' proof, the second gives some information of how many derivatives are necessary and provides with more details about the various special cases. 

\paragraph{I. The boost method}
Let us employ the frame which is aligned with the family of boosts $B_{\tau}=\exp({\tau{\mathcal X}})$ providing us with the limit in Theorem \ref{mainthm}. This is a pointwise action but  consider a point $p$ and assume this is
regular\footnote{In the sense of \cite{kramer}; i.e., the number
of independent Cartan invariants do not change at $p$.} implying that there exists a neighbourhood $U$ such that the algebraic structure of the space does not change over $U$. Consider now a compact $K\subset U$ neighbourhood of $p$. The boost $B_{\tau}$ acts pointwise, however, since $K$ is compact, we can assume that the $B_{\tau}$ does not depend on the point in $K$. Thus, with respect to the adapted frame, the boost will be constant over $K$: 
\[
{\mbold\ell}\mapsto e^{-\tau\lambda_1}{\mbold\ell}, \quad {\bf
n}\mapsto e^{\tau\lambda_1}{\bf n}, \quad {\bf m}\mapsto e^{-\tau\lambda_2}{\bf m}, \quad \tilde{\bf m}\mapsto e^{\tau\lambda_2}\tilde{\bf m}
\]
Note that such a boost will transform the curvature tensors
at $p$ as follows: 
\beq
\exp(\tau{\mathcal X})(T)_{\bf b}=\exp(\tau{\bf b}\cdot{\mbold \lambda})(T)_{\bf b}.
\eeq
Now, in relation to the $\epsilon$-property \cite{epsilon}, we have that this boost manifests the limit: 
\[ X=\tilde{X}+N. \] 
Furthermore, since $K$ is compact, $||N||$ will have a maximum, $N_{\max}$, over $K$ so that $ ||N||\leq N_{\max}$; consequently, 
\[ ||X-\tilde{X}||\leq N_{\max}.\] 
In the VSI case, $\widetilde{X}=0$, so that $X=N$ and the $\epsilon$-property implies the components can be arbitrary close to flat space. 

Consider now the action of the boost $B_{\tau}$. The vector $N$ is a direct sum of tensorial objects implying that, since it must be of type III, or simpler, that there is an $a>0$ such that 
\[ ||B_{\tau}(N)||\leq e^{-a\tau}||N||\leq e^{-a\tau}N_{\max}.\]
We can assume that the neighbourhood $U$ is a coordinate patch and map $U$ into $\mathbb{R}^4$ with $p$ at the origin. Then we can assume that the compact neighbourhood $K\subset \mathbb{R}^4$. We now consider the $ X=N$ as a set of differential equations on $U$ as follows: 

Express the components of the Riemann tensor (relative to the adapted frame)  in terms of the spin-coefficients $\Gamma^{\mu}_{~\alpha\beta}$ in the standard way:
\beq
R^{\mu}_{~\alpha\beta\nu}={\mbold\partial}_{\nu}(\Gamma^{\mu}_{~\alpha\beta})-{\mbold\partial}_{\beta}(\Gamma^{\mu}_{~\alpha\nu})+\left(\Gamma\star\Gamma\right)^{\mu}_{~\alpha\beta\nu},
\eeq
where $\Gamma\star\Gamma$ indicates the quadratic terms in the spin-coefficients. Similarly, the covariant derivatives, can also be expressed using the spin-coefficients: 
\[ \nabla R=\nabla R({\mbold\partial}{\mbold\partial}\Gamma,{\mbold\partial}\Gamma,\Gamma), \quad  \nabla\nabla R=\nabla\nabla R({\mbold\partial}{\mbold\partial}{\mbold\partial}\Gamma,{\mbold\partial}{\mbold\partial}\Gamma,{\mbold\partial}\Gamma,\Gamma), \quad \text{etc.}\]
We thus replace the left-hand side of $X=N$ with a PDE: 
\beq 
\text{\sc{Pde}}[\Gamma]=N
\label{pde}.\eeq 
The relation between the frame  ${\mbold\partial}_\alpha$  and $\Gamma$ are given via:
\beq 
[{\mbold\partial}_{\alpha},{\mbold\partial}_{\beta}]=-(\Gamma^\mu_{~\alpha\beta}-\Gamma^\mu_{~\beta\alpha}){\mbold\partial}_{\mu}
\label{gamma}.\eeq
The eqs.(\ref{pde}) and (\ref{gamma}) provide us with a set of PDEs and integrability conditions over the neighbourhood $U$ in terms of the functions $\Gamma^\mu_{~\alpha\beta}$. We can now consider the ``boosted'' set of equations 
\beq 
\text{\sc{Pde}}[\widehat{\Gamma}]=B_{\tau}(N)
\label{pdeB}\eeq
over $U$. This gives us a one-parameter family of equations. Since $B_{\tau}(N)$ can be made arbitrary small, this can be seen as a pertubation of a PDE describing flat space. 
Let us now consider the Cartan equivalence problem \cite{KN} which will give us a more direct perturbation. Let us make sure we consider sufficient number of derivatives in $X$ to satify the Cartan bound. Consider the point $p$. For every $\tau$ there is an inverse boost so that the $B_{\tau}(N)$ is mapped onto $X=N$. Considering the boost that leaves the point $p$ fixed, then the equivalence principle implies that there exists a diffeomorphsim $\phi_{\tau}$ that maps $K$ onto $\phi_{\tau}(K)$, leaving $p$ fixed, and induces (through $\phi^*_{\tau}$) the boost $B_{\tau}$ acting on the tangent space at $p$. The diffeomorphism does not necessarily map $K$ into itself. Consider an increasing sequence $\tau_n$  such that $\tau_n\rightarrow \infty$,
and define $K_n=\phi_{\tau_n}(K)$, which is compact. In particular, $K_n$ is closed and $p\in K_n$. This implies further that $p\in K\cap\left(\bigcap_{n}K_n\right)$ (and closed). 

Note that the set $ K\cap\left(\bigcap_{n}K_n\right)$ may not be a neighbourhood, indeed, in many cases it may be a single line. Thus, the limiting procedure may result in a mere pointwise result at $p$ causing the functions $\widehat{\Gamma}$ to not necessarily have the right functional dependence in the limit $\tau\rightarrow \infty$ over $K$. Thus in the limit we should only consider the value of $\Gamma$ restricted to the set  $K\cap\left(\bigcap_{n}K_n\right)$. On the other hand, for $\tau_n$ finite, the result applies to a neighbourhood. 

It is thus more appropriate to consider the following perturbed PDE:
\beq
\text{\sc{Pde}}[\widehat{\Gamma}]=B_{\tau}(\phi^*_{\tau}(N))
\eeq
where $\phi^*(N)$ should be thought of as acting on the components of $N$ as functions; i.e., if $N_{a...b}$ is a component, then $\phi^*(N_{a...b})=N_{a...b}\circ\phi$ \cite{KN}.

Assuming we are considering a certain metric $g$, we know that there exists a  set of equations to this PDE. In particular, there is a continuous family of solutions $\widehat{\Gamma}(\tau)$ which solves eq.(\ref{pdeB}). Moreover, over the compact region $K_n$, since this is a perturbed PDF implies that it satisfies a Cauchy property, namely, there exists an increasing sequence $\tau_n\rightarrow \infty$, such that for any $\epsilon>0$, there exists an $M$ such that:
\beq 
n,m\geq M ~\Rightarrow ~||\widehat{\Gamma}(\tau_n)-\widehat{\Gamma}(\tau_m)||<\epsilon.
\label{Cauchy}\eeq 

The diffeomorphsm $\phi_\tau$ acts as follows on the connection
\cite{kramer,KN}: if ${\mbold\Omega}$ is the connection
form, then $\tilde{\phi}_{\tau}^*{\mbold\Omega}=\widehat{\mbold\Omega}$, where
$\tilde{\phi}_{\tau}$ is the induced transformation on the frame
bundle and $\widehat{\mbold\Omega}$ is the transformed connection, we get over $U$:
\[ \widehat{\Gamma}^{\mu}_{~\alpha\beta}=(M^{-1})^\mu_{~\nu}\left[M^\gamma_{~\alpha}\phi_t^*(\Gamma^{\nu}_{~\gamma\delta})+M^\gamma_{~\alpha,\delta}\right] M^\delta_{~\beta}.\]
Furthermore, since $p=\phi_{\tau}(p)$,  we  have
$\Gamma^{\mu}_{~\gamma\delta}=\phi_{\tau}^*(\Gamma^{\nu}_{~\gamma\delta})$
at $p$. Moreover, in the aforementioned frame, we have
$M^1_{~1,\mu}=-M^2_{~2,\mu}$, $M^3_{~3,\mu}=-M^4_{~4,\mu}$  while all other components of
$M^\gamma_{~\alpha,\delta}$ are zero. 

Eq. (\ref{Cauchy}) implies that the connection coefficients can be chosen to be arbitrary close to flat space. Component-wise we have $|\widehat{\Gamma}^\alpha_{~\beta\gamma}(\tau_n)-\widehat{\Gamma}^\alpha_{~\beta\gamma}(
\tau_m)|< \epsilon$. Since some of the components of the connection transforms as tensor components under the boost, if the component has boost-weight ${\bf b}$, we get:
\beq 
|\widehat{\Gamma}^\alpha_{~\beta\gamma}(\tau_n)-\widehat{\Gamma}^\alpha_{~\beta\gamma}(\tau_m)|
 &=& |\exp[{{\bf b}\cdot{\mbold \lambda}\tau_n}]\Gamma^\alpha_{~\beta\gamma}-\exp[{{\bf b}\cdot{\mbold \lambda}\tau_m}]\Gamma^\alpha_{~\beta\gamma}| \nonumber \\
&=&
\exp[{{\bf b}\cdot{\mbold \lambda}\tau_n}]\left|{\Gamma}^\alpha_{~\beta\gamma}-\exp[{{\bf b}\cdot{\mbold \lambda}(\tau_m-\tau_n)}]{\Gamma}^\alpha_{~\beta\gamma}\right| \nonumber \\
&<& \epsilon. 
\eeq
If we fix $m$, then it it is clear that:
\[ {\bf b}\cdot{\mbold \lambda}\leq 0, \quad \text{or} \quad \Gamma^\alpha_{~\beta\gamma}=0 ~\text{for}~~  {\bf b}\cdot{\mbold \lambda}>0.\] 
This is valid for an arbitrary point $p\in U$; hence it is valid everywhere in the neighbourhood. 

We can now consider the connection coefficients that transform tensorially, and consider the various cases. By a simple geometric argument, we get:
\begin{enumerate}
\item{} $\tilde{\kappa}=\kappa=\tilde{\rho}=\rho=\tilde{\sigma}=\sigma=0$, and hence Kundt; \emph{or} 
\item{} $\tilde{\kappa}=\tilde{\rho}=\tilde{\sigma}=\tilde{\tau}=0$, and hence, a Walker space possessing an invariant null 2-plane.  
\end{enumerate} 

\paragraph{II. The direct method}
Before we embark on the direct method let us remind ourselves of some useful identities and formulae. The covariant derivative of a tensor $T$ has the formal structure: 
\beq
\nabla T=\partial T-\sum \Gamma\star T,
\eeq
where the $\partial T$ indicates the partial derivative piece, and the $\Gamma\star T$ indicates the algebraic piece where $\Gamma$ are the spin-coefficients. 
Furthermore, also useful are the 2nd Bianchi identity and the generalised Ricci identity:
\beq
R_{ab(cd;e)}&=&0, \\
\left[\nabla_a,\nabla_b \right] T_{c_1...c_k}&=&\sum_{i=1}^kT_{c_1...d...c_k}R^d_{~c_iab},
\eeq
which enable us to permute covariant derivatives up to algebraic terms. We note that all the algebraic terms are of lower order in derivatives of $T$. 

Assuming that $T$ fulfills the ${\bf N}^G$-property, there are therefore two potential ways the covariant derivative $\nabla T$ of the tensor can violate the ${\bf N}^G$-property; namely, through the components of the partial derivatives propagating the components of $T$ across the ${\bf b}\cdot{\mbold \lambda}=0$ line in boost-weight space, and the algebraic terms. At every level of covariant derivatives we can thus first permute the derivatives as much as possible, and the impose the necessary conditions on the remaining components.  Thus we ensure that  the ${\bf N}^G$-property is valid at every lower derivative so that when using the Ricci identity it does not involve ${\bf N}^G$-property breaking terms through the algebraic piece.  

Let us first split the Riemann tensor into its irreducible parts $R$, $S_{ab}$, $W^+_{abcd}$ and $W^-_{abcd}$. For a VSI space $R=0$ so the trace-free Ricci tensor, $S_{ab}$ is equal to the Ricci tensor $S_{ab}=R_{ab}$.

Then consider a non-zero Ricci tensor.  By considering $R_{ac}R^c_{~b}$ or higher powers if necessary, we can assume the Ricci tensor is of the form (brackets mean symmetrisation): 
\beq
R= a{\mbold\ell}{\mbold\ell}+b({\mbold\ell}\tilde{\mbold m})+c\tilde{\mbold m}\tilde{\mbold m}.
\eeq
We need to compute the derivatives $\nabla^{(k)}R_{ab}$. The various cases depend on the components $a,b$ and $c$ and let us consider these in turn. 
\paragraph{\underline{$ac\neq 0$}}. Here, we can boost so that $a$ and $c$ are both constants. 
Computing first $\nabla_aR$, some of the  components are proportional to:
\beq
(-1,2): ~~a\tilde{\sigma}, && (0,1): ~~a\tilde{\kappa}, \nonumber \\
(1,0): ~~c\tilde{\sigma}, && (2,-1): ~~c\tilde{\kappa} \nonumber;
\eeq
consequently, by the ${\bf N}^G$-property, $\tilde{\kappa}=\tilde{\sigma}=0$. 
Computing $\nabla_b\nabla_aR$ we get similarly $\tilde{\rho}=\tilde{\tau}=0$. Thus this is a Walker space. 

\paragraph{\underline{$ab\neq 0$, $c=0$}}. Here, we can boost so that $a$ and $b$ are both constants. Considering the 1st derivative, $\nabla_aR$, we get (among others) the components: 
\beq
(-1,2): ~~a\tilde{\sigma}, && (0,1): ~~a\tilde{\kappa}, \nonumber \\
(0,1): ~~b\tilde{\sigma}, && (1,0): ~~b\tilde{\kappa}, ~~ (0,-1):~~b\tilde{\rho} \nonumber;
\eeq
hence, there are two possibilities $\tilde{\kappa}=\tilde{\rho}=0$, or  $\tilde{\kappa}=\tilde{\sigma}=0$. 
By computing $\nabla_b\nabla_aR$, we quickly get $\tilde{\rho}=0$. Thus we need to consider the two cases $\tilde{\sigma}\neq 0$, and $\tilde{\sigma}= 0$. 

From the 2nd derivative, and the Law's eq. (3.4) in \cite{Law}, we get the conditions: 
\beq
\tilde{\sigma}\rho=\tilde{\sigma}\sigma=\tilde{\sigma}\kappa=\tilde{\tau}\kappa=\tau\tilde{\sigma}+\rho\tilde{\tau}=0.
\eeq
If $\tilde{\sigma}\neq 0$, then $\rho=\kappa=\sigma=\tau=0$, and consequently Walker. 

Assume then $\tilde{\sigma}= 0$. If $\tilde{\tau}=0$, then the space is again Walker. Left to consider is therefore $\tilde{\tau}\neq 0$ and $\tilde{\kappa}=\tilde{\sigma}=\tilde{\rho}=0$. From the equations above, we thus get $\kappa=\rho=0$ also. If $\sigma=0$, then the space is Kundt. We need thus to check if $\sigma\neq 0$. By computing $\nabla^{(3)}R$ and $\nabla^{(4)}R$ we get numerous constraints from the requiring the ${\bf N}^G$-property. Most of these are the same as the Bianchi identity. Imposing these and some algebraic conditions on the spin-coefficients, we get the following b.w. $(0,0)$-component to be:
\[ R_{22;4311}=12\tilde{\tau}^3\sigma b.\]
By the ${\bf N}^G$-property this component has to vanish which is contradictory to the assumptions given above. Hence, the space has to be either Walker or Kundt. 

\paragraph{\underline{$b\neq 0, a=c=0$}} Here, we notice that there is a discrete symmetry which flips boost-weight space with respect to the line $b_1-b_2=0$. Using this symmetry, the case here essentially reduces to the case $ab\neq 0$ above. Thus also here the ${\bf N}^G$-property implies Walker or Kundt. 

\paragraph{\underline{$a\neq 0, b=c=0$}}. Lastly we need to consider the case when only $a$ is non-zero. First we look at $\nabla^{(2)}R$. Using the symmetry $(b_1,b_2)\mapsto (b_1,-b_2)$ we get the conditions: 
\beq
\tilde{\kappa}=\tilde{\sigma}=\tilde{\rho}=\rho=0.
\eeq
In addition the vanishing of the $(0,0)$ components implies $\kappa\tilde{\tau}=0$. If $\tilde{\tau}=0$, then the space is Walker. Assume thus $\tilde{\tau}\neq 0$, implying $\kappa=0$.

In addition, the Bianchi identities need to be fulfilled. Imposing these and computing the symmetric 2-tensor $\Box R_{ab}$, we note that this is of the following form: 
\beq
\Box R= A{\mbold\ell}{\mbold\ell}+B({\mbold\ell}\tilde{\mbold m})+C\tilde{\mbold \ell}{\mbold m}.
\eeq
If $B$ or $C$ is non-zero, then the previous computations implies that, by considering possibly 4 more derivatives, that its Walker or Kundt. The requirements $B=C=0$ impose additional conditions on the spin-coefficients. Eventually, after possibly 4 more derivatives, also this implies its Walker or Kundt.  

\paragraph{The Weyl tensor}

Let us now consider the self-dual (or anti-self-dual by orientation reversion) Weyl tensor.  This needs to be of type III, N, or O, see \cite{bw-pseudo}. If it is of type III, then $({\sf W}^+)^2$ as a bivector operator, is of type N. Consider thus the case of type N. By discrete symmetries, we can thus assume that (in a short hand notation): 
\beq
W^+=\phi ({\mbold\ell}\wedge {\mbold m})({\mbold\ell}\wedge {\mbold m}).
\eeq
We note that the discrete symmetry that act on boost-weight space as $(b_1,b_2)\mapsto (-b_2 ,-b_1)$, leaves $W^+$ invariant. 
By computing the second covariant derivative, $\nabla_b\nabla_aW^+$, we pick out the following components (including their boost-weights): 
\beq
{\mbold n}_a{\mbold n}_b(\tilde{\mbold m}\wedge{\mbold m})(\tilde{\mbold m}\wedge{\mbold m}): &\propto \kappa^2, & (2,0) \nonumber \\
\tilde{\mbold m}_a\tilde{\mbold m}_b(\tilde{\mbold m}\wedge{\mbold m})({\mbold \ell}\wedge{\mbold n}): &\propto \sigma^2, & (0,-2) \nonumber \\
{\mbold m}_a{\mbold m}_b(\tilde{\mbold m}\wedge{\mbold m})(\tilde{\mbold m}\wedge{\mbold m}): &\propto \rho^2,& (0,2) \nonumber 
\eeq
By the ${\bf N}^G$-property of $W^+$ and $\nabla^{(2)}W^+$, and using the remaining discrete symmetry we thus get the two cases: 
\[ \kappa=\sigma=0, \quad \text{or}\quad \kappa=\rho=0.\] 

Consider first $\kappa=\sigma=0$. Computing $\nabla_d\nabla_c \nabla_b\nabla_aW^+$, in particular the component ${\mbold m}_a{\mbold m}_b{\mbold m}_c{\mbold m}_d(\tilde{\mbold m}\wedge{\mbold n})(\tilde{\mbold m}\wedge{\mbold n})\propto \rho^4$ of boost weight $(2,2)$. Again, utilising the remaining discrete symmetry this must be zero. Thus, $\kappa=\sigma=\rho=0$. 

Hence, we are left with $\kappa=\rho=0$, while $\sigma$ need not be zero. Assume thus that $\sigma\neq 0$. Using the 2nd derivative once again, but this time the components: 
\beq 
\tilde{\mbold m}_a{\mbold n}_b({\mbold m}\wedge{\mbold n})({\mbold \ell}\wedge{\mbold m}) \propto &\tilde{\kappa}\sigma, & (1,1) \nonumber \\
\tilde{\mbold m}_a\tilde{\mbold m}_b({\mbold m}\wedge{\mbold n})({\mbold \ell}\wedge{\mbold m}) \propto & \tilde{\rho}\sigma, & (0,0)
\eeq 
thus, $\tilde{\kappa}=\tilde{\rho}=0$. From Law's eq.(3.4a) in \cite{Law}, it now implies  that $\tilde{\sigma}\sigma=0$; hence, $\tilde{\sigma}=0$. 

Thus we are in the situation where we are of one of the following cases:
\begin{enumerate}
\item{} $\kappa=\tilde{\kappa}=\rho=\tilde{\rho}=\tilde{\sigma}=0$, $\sigma\neq 0$. 
\item{} $\kappa=\rho=\sigma=0$. 
\end{enumerate}
It is important here that we keep track of the components of the lower derivatives. 

Consider next the first case where $\sigma\neq 0$. Then using the 4th derivative, we get the component: 
\[ \tilde{\mbold m}_a{\mbold m}_b{\mbold \ell}_c\tilde{\mbold m}_d(\tilde{\mbold m}\wedge{\mbold n})(\tilde{\mbold m}\wedge{\mbold m})\propto \sigma^2\tilde{\tau}^2, \]
of boost weight (0,0); consequently, $\tilde{\tau}=0$ and thus all the tilded variables $\tilde{\kappa}=\tilde{\rho}=\tilde{\sigma}=\tilde{\tau}=0$, and this is thus a Walker space. 

We are left to consider the second case where $\kappa=\rho=\sigma=0$. If $\tau=0$, we have a Walker space. Assume thus that $\tau\neq 0$. By computing the 4th derivative, we notice that one of the components, 
\[ \tilde{\mbold m}_a\tilde{\mbold m}_b{\mbold \ell}_c{\mbold \ell}_d({\mbold \ell}\wedge{\mbold n})({\mbold \ell}\wedge{\mbold n})\propto \tau^4.\]
This component has boost-weight (-2,-2) and has the same boost-weight as $W^+$ under the exchange of tilded spin-coefficinents with non-tilded ones. After a lengthy computation, sometimes needing to go to 8th order, we get that $\tilde{\kappa}=\tilde{\sigma}=\tilde{\rho}=0$ (analogously as above). Thus, implying that this is a Kundt space.  

If $W^+$ but $W^-\neq 0$, then we can consider the discrete symmetry which interchanges tilded spin-coefficients with non-tilded ones: $\tilde{x}\leftrightarrow x$, where $x$ is the spin coefficients. Then an identical computation as above implies that the space is either Walker with an invariant null 2-plane, or Kundt. The theorem follows then from these considerations. 

Although the argument involves 8th derivatives, it is suspected that the number of derivatives needed is less that this. In particular, no examples of spaces which are VSI$_k$ but not VSI$_{k+1}$ are known for $k>3$. The example eq.(\ref{VSI3}) is VSI$_3$ but not VSI$_4$,  however, this is a Walker metric which is a restricted class. This example, and an explanation of how this example can be extended to other similar examples, will be given later.  However, a question still remains: Are there examples of non-Walker metrics which are  VSI$_k$ but not VSI$_{k+1}$  for $k>3$? 

\subsection{Neutral VSI metrics} 
 
\subsubsection{4D Neutral case: Kundt metrics} 
 Using  
\beq 
\d s^2=2\left({\mbold\ell}{\mbold n}-{\mbold m}\tilde{\mbold m}\right).  
\eeq 
We will consider the pseudo-Riemannian Kundt case for which the transverse space  is 2-dimensional. Requiring the ${\bf N}$-property, this must be flat space (see, \cite{VSI-pseudo,AMCH}). Therefore, we can write:  
\[ -2{\mbold m}\tilde{\mbold m}=2\d U\d U=-\d T^2+\d X^2.\]  
There are two classes of 4D Neutral Kundt VSI metrics, they can be written \cite{VSI-pseudo,AMCH}:  
\beq 
\d s^2=2\d u\left(\d v+H\d u+W_{\mu_1}\d x^{\mu_1}\right)+2\d U\d V, 
\eeq  
where: 
\paragraph{Null case:} 
\beq 
W_{\mu_1}\d x^{\mu_1}&=& vW^{(1)}_{U}(u,U)\d U+W^{(0)}_{U}(u,U,V)\d U+W^{(0)}_{V}(u,U,V)\d V,\nonumber \\ 
H&=& v H^{(1)}(u,U,V)+H^{(0)}(u,U,V), 
\eeq 
\paragraph{Spacelike/timelike case:} 
\beq 
W_{\mu_1}\d x^{\mu_1}&=& vW^{(1)}\d X+W^{(0)}_T(u,T,X)\d T+W^{(0)}_X(u,T,X)\d X, \nonumber\\ 
H&=& \frac{v^2}8{\left(W^{(1)}\right)^2}+vH^{(1)}(u,T,X)+H^{(0)}(u,T,X), 
\eeq 
and  
\beq 
W^{(1)}=-\frac{2\epsilon}{X}, \text{ where } \epsilon=0, 1. 
\eeq 
 
We note that these possess an invariant null-line if $W^{(1)}=0$, and a 2-dimensional invariant null-plane if $W^{(0)}_{V}=0$ for the null  
case\footnote{In order for the spacelike/timelike case to possess an invariant  
null 2-plane, it needs to be a special case of the null case.}. 

\subsubsection{4D Neutral signature: Walker metrics}
This class of metrics provides us with a new set of VSI metrics which have not been considered before. This is due to the fact that these VSI metrics does not in general possess the ${\bf N}$-property, but rather the weaker requirement of the ${\bf N}^G$-property. 

Using the following Walker form,
\beq 
\mathrm{d}s^2=2\d u(\d v+A\d u+C\d U)+2\d U(\d V+B\d U), 
\label{Walker2}\eeq 
the result can be summarised in the following theorem: 
\begin{thm}
Consider the metric (\ref{Walker2}) where
\beq
A&=&vA_1(u,U)+VA_2(u,U)+A_0(u,U), \nonumber \\
B&=&VB_1(u,v,U)+B_0(u,v,U)  \nonumber \\
C&=&C_1(u,v,U)+VC_2(u,U)+C_0(u,U).
\label{cond1}\eeq
Then the following holds:
\begin{enumerate}
\item{} The metric is a VSI$_1$ space. If 
\[ A_2\frac{\partial ^2B_1}{\partial v^2}\neq 0, \quad \text{or} \quad  A_2\frac{\partial ^3C_1}{\partial v^3}\neq 0,\] 
then it is not VSI$_2$. 
\item{} If
\beq
B_1 &=& vB_{11}(u,U)+B_{10}(u,U) \nonumber \\
C_1 &=& v^2C_{12}(u,U)+vC_{11}(u,U)+C_{10}(u,U),
\label{cond2}
\eeq
then it is a VSI$_3$ space. If in addition, 
\[ A_2\frac{\partial ^4B_0}{\partial v^4}\neq 0,\] 
then it is not VSI$_4$. 
\item{} If eq. (\ref{cond2}) holds and, in addition:
\beq
B_0=v^3B_{03}(u,U)+v^2B_{02}(u,U)+vB_{01}(u,U)+B_{00}(u,U)
\label{cond3}\eeq
then the space is VSI.
\end{enumerate}
\end{thm}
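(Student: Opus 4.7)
The plan is to work in an explicit null coframe adapted to the Walker structure (\ref{Walker2}) and to verify the ${\bf N}^G$-property of $\nabla^{(n)}R$ order-by-order, invoking Theorem \ref{thm:N-property} to translate each such verification into a VSI$_n$ statement. I take
\[
\ell=du,\quad n=dv+A\,du+C\,dU,\quad m=dU,\quad \tilde m=-(dV+B\,dU),
\]
so that $ds^2=2\ell n-2m\tilde m$, with dual frame $e_\ell=\partial_v$, $e_n=\partial_u-A\partial_v$, $e_m=\partial_U-C\partial_v-B\partial_V$, $e_{\tilde m}=-\partial_V$. Cartan's first structure equation reproduces $\kappa=\rho=\sigma=\tau=0$ together with the further vanishing spin coefficients characteristic of the canonical Walker form, while the remaining coefficients are explicit first derivatives of $A,B,C$ in the frame directions. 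Reading off $R_{abcd}$ from Cartan's second structure equation, one finds that every non-zero boost-weight component satisfies $b_1+b_2\le 0$, so I would choose $G$ acting as $G{\bf b}=(b_1+b_2,\,b_1-b_2)$; the required ${\bf N}^G$-property then forbids components with $b_1+b_2>0$, with $b_1+b_2=0$ and $b_1-b_2>0$, and the $(0,0)$-component.

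First I would verify that under (\ref{cond1}) alone the Riemann tensor already has the ${\bf N}^G$-property --- the prescribed $v$- and $V$-dependence is designed precisely to kill the borderline components, in particular the $(0,0)$-piece. This gives VSI$_1$. Next, I compute $\nabla R$ using $\nabla T=\partial T-\sum\Gamma\star T$ and isolate the boost-weight components whose $G$-image lies on the forbidden boundary. A direct calculation shows these are proportional to $A_2\,\partial_v^2 B_1$ and $A_2\,\partial_v^3 C_1$; contracting $\nabla R$ with itself and with $R$ through enough powers of $g^{-1}$ produces an explicit quadratic invariant in these two quantities, so whenever either is non-zero the metric fails VSI$_2$. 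Condition (\ref{cond2}) kills both partial derivatives, restoring the ${\bf N}^G$-property for $\nabla R$. Continuing to $\nabla^{(2)}R$ and $\nabla^{(3)}R$, each new candidate obstruction arises from further $v$-differentiation of coefficients already forced to be low-degree $v$-polynomials by (\ref{cond1}) and (\ref{cond2}), and so vanishes automatically, yielding VSI$_3$. The unique surviving obstruction at order four is proportional to $A_2\,\partial_v^4 B_0$, furnishing the non-VSI$_4$ statement.

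For the third assertion, condition (\ref{cond3}) makes $B_0$ cubic in $v$, so $\partial_v^4 B_0=0$. Combined with the earlier restrictions, \emph{every} fourth or higher $v$-derivative of $A$, $B$, and $C$ vanishes identically. I then induct on $n$: using the Ricci identity to normal-order covariant derivatives and the fact that the algebraic $\Gamma\star T$ pieces in $\nabla T$ reduce to contractions with $\nabla^{(k)}R$ of strictly lower order (already known to satisfy ${\bf N}^G$ by the induction hypothesis), one shows that the only way a boost-weight component of $\nabla^{(n)}R$ can leave the forbidden cone is through a $\partial_v$-derivative of order $\ge 4$ hitting $A$, $B$, or $C$ --- and these all vanish. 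Hence every $\nabla^{(n)}R$ satisfies the ${\bf N}^G$-property and Theorem \ref{thm:N-property} delivers VSI.

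The main obstacle will be the bookkeeping of higher-order covariant derivatives: each application of $\nabla_n$ or $\nabla_{\tilde m}$ contains $\partial_v$ pieces that can raise boost weights, and the $\Gamma\star T$ terms reintroduce products whose boost weights must be tracked carefully when permuting derivatives via the Ricci identity. The polynomial-in-$v$ ansatz is exactly what makes this bookkeeping terminate, but extracting the non-vanishing invariants that witness the non-VSI$_k$ statements --- rather than merely exhibiting non-vanishing frame components --- requires producing a specific scalar contraction of $\nabla^{(k)}R$ (possibly with copies of $R$) whose $G$-weight is zero, and this final identification is the most delicate part of the argument.
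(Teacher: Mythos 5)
Your overall strategy is the same as the paper's: work in the coframe $\{du,\,dv+Adu+CdU,\,dU,\,dV+BdU\}$, observe that every Riemann component with $b_1+b_2>0$ vanishes automatically for the Walker form, choose $G$ so that the ${\bf N}^G$-property amounts to killing the $(2,-2)$, $(1,-1)$ and $(0,0)$ components (which is exactly what (\ref{cond1}) encodes), and then push the ${\bf N}^G$-check through successive covariant derivatives, exhibiting explicit non-vanishing scalar contractions to witness the negative statements. Your closing induction for the full VSI claim (once (\ref{cond3}) holds, every fourth or higher $v$-derivative of $A$, $B$, $C$ vanishes, and the Ricci-identity terms only involve lower-order data already known to satisfy ${\bf N}^G$) is actually more explicit than the paper, which simply asserts that checking $\nabla^{(4)}({\rm Riemann})$ suffices.

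There is, however, a genuine derivative-counting error that makes the argument internally inconsistent. First, the ${\bf N}^G$-property of the Riemann tensor alone gives only VSI$_0$; VSI$_1$ requires separately verifying that $\nabla R$ (as a direct summand, with the \emph{same} $G$) satisfies the property, which the paper does by direct computation. More seriously, you locate the obstructions $A_2\partial_v^2 B_1$ and $A_2\partial_v^3 C_1$ as components of $\nabla R$ and propose to contract $\nabla R$ with itself to witness the failure of VSI$_2$. If these quantities really appeared as ${\bf N}^G$-violating components of $\nabla R$, the resulting scalar would be a \emph{first}-order invariant and the metric would fail VSI$_1$, contradicting part 1 of the very statement you are proving. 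These obstructions in fact first appear at second order --- in the paper's notation $R_{1424;33}=A_2(B_1)_{,vv}$ and $R_{2324;11}=\tfrac12 A_2\bigl(2(B_1)_{,vv}-(C_1)_{,vvv}\bigr)$ --- with witnessing invariant $R_{abcd;ef}R^{abcd;ef}$, and the fourth-order witness is $R_{abcd;efgh}R^{abcd;efgh}=576[(B_0)_{,vvvv}]^2A_2^4$. Shifting every step to the correct derivative order repairs the proof, but as written the placement of the obstructions is wrong, and the delicate point you yourself flag --- producing a genuinely non-vanishing full contraction rather than a non-vanishing frame component --- cannot be resolved at the order where you attempt it.
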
 
The proof is this result is partly by direct computation of the curvature tensors and requiring ${\bf N}^G$-property. Let us indicate how the proof goes and in the process we elude to how these can be generalised. 

Starting with the Walker form eq.(\ref{Walker2}) we can compute the Riemann tensor. We notice that the metric gives Riemann components in the lower triangular part of boost-weight space. Let us for short use notation such that the basis one-forms are 
\beq
\{ {\mbold\omega}^1, {\mbold\omega}^2,{\mbold\omega}^3,{\mbold\omega}^4\}=\{\d u,\d v+A\d u+C\d U,\d U, \d V+B\d U\} 
\eeq
Then a component of a tensor would have the boost-weight as follows (indices downstairs):
\[ (b_1,b_2)=(\#(1)-\#(2),\#(3)-\#(4));\] 
i.e., the component $R_{1223}$, say, will have boost-weight $(-1,1)$. 

For the Walker metric the components of interest in relation to the ${\bf N}^G$-property are: 
\beq
(2,-2): && R_{1414}=-B_{,vv},\\
(1,-1): && R_{1214}=-\tfrac 12C_{,vv}, \quad R_{1434}=-B_{,vV}\\
(0,0):&& R_{1212}=-A_{,vv},\quad R_{1234}=-\tfrac 12 C_{,vV}\quad R_{3434}=- B_{,VV} \\
(-1,1): && R_{1223}=A_{,vV}, \quad R_{2334}=\tfrac 12 C_{,VV}, \\
(-2,2): && R_{2323}=-A_{,VV}
\eeq
while $(R)_{ (b_1,b_2)}=0$ for $b_1+b_2>0$. Thus the Riemann tensor automatically satisfies the ${\bf S}_1^G$-property. In order for it to satisfy the ${\bf N}^G$-property we can set the components $(2,-2)$, $(1,-1)$ and $(0,0)$ to zero. Solving these equations gives the functional dependencies as given in (\ref{cond1}). This is thus a VSI$_0$ space. Indeed, by direct computation we note that $\nabla({\rm Riemann})$ satisfies the ${\bf N}^G$-property also, hence, it is in addition VSI$_1$. 

Assume then that (\ref{cond1}) is satisfied. Regarding the $\nabla^{(2)}({\rm Riemann})$ we note that this does not necessarily satisfy the ${\bf N}^G$-property (thus not VSI$_2$). One such nonvanishing scalar is $R_{abcd;ef}R^{abcd;ef}$. However, component-wise, the critical components are: 
\[ R_{1424;33}=A_2(B_{1})_{,vv}, \quad R_{2324;11}=\frac12 A_2\left(2(B_1)_{,vv}-(C_1)_{,vvv}\right).\] 
These give rise to the conditions mentioned and equating these to zero gives the solutions (\ref{cond2}). Satisfying eq.(\ref{cond2}) will now give the ${\bf N}^G$-property and thus VSI$_2$; indeed, VSI$_3$ by direct computation. 

Thus assume  (\ref{cond1}) and (\ref{cond2}) are satisfied. Computing $R_{abcd;efgh}R^{abcd;efgh}$ we get:
\[R_{abcd;efgh}R^{abcd;efgh}=576[(B_0)_{,vvvv}]^2A_2^4. \] 
Hence, it is not VSI$_4$ if $A_2(B_0)_{,vvvv}\neq 0$. Requiring that $(B_0)_{,vvvv}=0$, gives the solution in (\ref{cond3}) and by inspection $\nabla^{(4)}({\rm Riemann})$ satisfies the ${\bf N}^G$-property. This is sufficient for the metric to be VSI. 

We note that this proof also provides us with examples of metrics being VSI$_3$ but not VSI$_4$. For example, if  (\ref{cond1}) and (\ref{cond2}) are satisfied, but $A_2, (B_0)_{,vvvv}\neq 0$, then it is VSI$_3$ but not VSI$_4$. The example given in the introduction, eq.(\ref{VSI3}) is perhaps the simplest member of this family. 

Similarly, metrics being VSI$_1$ but not VSI$_2$ can be found analogously; as a simple set of examples of metrics of this kind:
\beq
ds^2=2du(dv+Vdu+bv^3dU)+2dU(dV+aVv^2dU),
\eeq
where $a$ and $b$ are constants, or functions depending on $(u,U)$, not both being zero.
\section{Discussion} 
In this paper we have studied pseudo-Riemannian metrics with degenerate curvature structure in the sense that they are not characterized by their polynomial curvature invariants. In particular, we related these to the ${\bf S}_1^G$-property. Specifically, we have three main results: 
\begin{enumerate}
\item{} In a pseudo-Riemannian space of arbitrary dimension and signature, a space (tensor) not characterized by its polynomial invariants possesses the ${\bf S}_1^G$-property. 
\item{} 
In the special case where the invariants vanish, the space (tensor) must possess the ${\bf N}^G$-property. 

\item{}In 4D neutral signature, a VSI space is either Kundt or a Walker space. 
\end{enumerate}
Indeed, in the latter case we constructed a new family of Walker VSI spaces. This shows that in the pseudo-Riemannian case these Walker metrics can provide new examples of metrics not being characterized by their invariants. Indeed, using the ideas given in this paper examples of VSI Walker metrics can be given in any signature $(k,k+m)$ where $k\geq 2$. As an example, the following is a neutral VSI Walker metric (with a 3D invariant null-space) in six dimensions: 
\[
ds^2=2du(dv+Vdu)+2dU(dV+\mathcal{V}dU)+2d\mathcal{U}(d\mathcal{V}+v^7d\mathcal{U}).
\] 
In future work, pseudo-Riemannian VSI metrics will be studied further and the ultimate aim is a full classification of VSI metrics in any dimension and signature. 

   
\appendix

\end{document}